\begin{document}

\newenvironment{proof}[1][Proof]{\textbf{#1.} }{\ \rule{0.5em}{0.5em}}

\newtheorem{theorem}{Theorem}[section]
\newtheorem{definition}[theorem]{Definition}
\newtheorem{lemma}[theorem]{Lemma}
\newtheorem{remark}[theorem]{Remark}
\newtheorem{proposition}[theorem]{Proposition}
\newtheorem{corollary}[theorem]{Corollary}
\newtheorem{example}[theorem]{Example}

\numberwithin{equation}{section}
\newcommand{\ep}{\varepsilon}
\newcommand{\R}{{\mathbb  R}}
\newcommand\C{{\mathbb  C}}
\newcommand\Q{{\mathbb Q}}
\newcommand\Z{{\mathbb Z}}
\newcommand{\N}{{\mathbb N}}

\newcommand{\bfi}{\bfseries\itshape}

\newsavebox{\savepar}
\newenvironment{boxit}{\begin{lrbox}{\savepar}
\begin{minipage}[b]{15.5cm}}{\end{minipage}\end{lrbox}
\fbox{\usebox{\savepar}}}

\title{{\bf Complete integrability versus symmetry}}
\author{R\u{a}zvan M. Tudoran}

\date{}
\maketitle \makeatother

West University of Timi\c soara, 
Faculty of Mathematics and Computer Science,\\
Department of Mathematics, 
Blvd. Vasile P\^arvan, No. 4, 
300223--Timi\c soara, Romania.\\
E-mail: {\sf tudoran@math.uvt.ro}.

\begin{abstract}
The purpose of this article is to show that on an open and dense set, complete integrability implies the existence of symmetry. 
\end{abstract}

\medskip

\textbf{AMS 2000}: 37J35; 76M60; 37K10; 70H06.

\textbf{Keywords}: integrable systems; symmetry; Hamiltonian dynamics; linear normal forms.

\section{Introduction}
\label{section:one}

One of the most important results in the literature that study the implication "symmetry $\Rightarrow$ integrability" in the context of a general dynamical system is a classical result due to Lie which says that if one have $n$ linearly independent vector fields $X_1,\dots,X_n$ on $\R^n$ that generates a solvable Lie algebra under commutation: $[X_1,X_j]=c^{1}_{1,j}X_1$, $[X_2,X_j]=c^{1}_{1,j}X_1+c^{2}_{2,j}X_2$, $\dots$, $[X_n,X_j]=c^{1}_{1,j}X_1+c^{2}_{2,j}X_2+\dots+c^{n}_{n,j}X_n$, for $j\in\{1,\dots,n\}$, where $c^{k}_{i,j}$ are the structural constants of the Lie algebra, then the differential equation $\dot{x}=X_1(x)$ is integrable by quadratures (see e.g. \cite{arnold}). Moreover, by proving these quadratures one obtain $n-1$ functionally independent first integrals of the system $\dot{x}=X_1(x)$, and consequently one get his complete integrability (see e.g. \cite{kozlov1}). This result was recently improved by Kozlov, by proving that in fact each of the differential equation $\dot{x}=X_j(x)$ is integrable by quadratures (see \cite{kozlov}). Recall also that the above mentioned solvable Lie algebra integrates to a solvable $n$-dimensional Lie group that acts freely on $\R^n$ and permutes the trajectories of the differential system, and hence is a symmetry group of the set of all trajectories of the system. Note that by "permutation" we mean that the group action maps trajectory to trajectory.

The purpose of this article is to study the converse implication: "integrability $\Rightarrow$ symmetry". The main result of this paper states that for a completely integrable differential system $\dot{x}=X(x)$, one can associate a vector field $\widetilde{X}$, defined on an open and dense set, such that to each $\overline{Y}\in\ker\left( \operatorname{ad}_{\widetilde{X}}\right)$ it corresponds a Lie symmetry of the vector field $X$; more exactly, there exists a real scalar function $\mu=\mu(\overline{Y})$ and a vector field $Y=Y(\overline{Y})$ such that $[X,Y]=\mu \cdot X$. Since the existence of a pair made of a vector field $Y$ and a scalar real function $\mu$ such that $[X,Y]=\mu \cdot X$ is equivalent to the existence of a one-parameter Lie group which permutes the trajectories of the differential system $\dot{x}=X(x)$ (see e.g. \cite{stephani}), one obtains that on an open and dense set, "integrability $\Rightarrow$ symmetry". This result is a generalization of a similar one for planar vector fields (see \cite{llibre}).

\section{Hamilton-Poisson formulation and local normal forms of completely integrable systems}

In this section we recall some results from \cite{tudoran} concerning Hamilton-Poisson formulations of completely integrable systems and their local normal form. These results will be explicitly used in order to prove the main results of this paper. For details on Poisson geometry and (Nambu-)Hamiltonian dynamics, see e.g. \cite{abraham}, \cite{arnoldcarte}, \cite{marsdenratiu}, \cite{holm1}, \cite{holm2}, \cite{holm3}, \cite{ratiurazvan}, \cite{nambu1}, \cite{nambu2}.

Let us consider a $\mathcal{C}^1$ differential system on $\R^n$:
\begin{equation}\label{sys}
\left\{\begin{array}{l}
\dot x_{1}=X_1(x_1,\dots,x_n)\\
\dot x_{2}=X_2(x_1,\dots,x_n)\\
\cdots\\
\dot x_{n}=X_n(x_1,\dots,x_n),\\
\end{array}\right.
\end{equation}
generate by the $\mathcal{C}^1$ vector field $X=X_1 \partial_{x_1}+\dots+X_n \partial_{x_n}$.

Suppose that $C_1,\dots,C_{n-2},C_{n-1}:\Omega\subseteq \R^n\rightarrow \R$ are $n-1$ independent $\mathcal{C}^2$ integrals of motion of \eqref{sys} defined on a nonempty open subset $\Omega\subseteq \R^n$.

Following \cite{tudoran}, the vector field $X$ can be realized on the open set $\Omega\subseteq \R^n$ as the Hamilton-Poisson vector field $X_H \in\mathfrak{X}(\Omega)$ with respect to the Hamiltonian function $H:=C_{n-1}$ and respectively the Poisson bracket of class $\mathcal{C}^{1}$ defined by:
$$
\{f,g\}_{\nu;C_1,\dots,C_{n-2}}dx_1\wedge\dots\wedge dx_n=\nu dC_1\wedge\dots dC_{n-2}\wedge df\wedge dg,
$$
where $\nu\in\mathcal{C}^{1}(\Omega,\R)$ is a given real function (rescaling). 

Recall also from \cite{tudoran} that the above Poisson bracket $\{\cdot,\cdot\}_{\nu;C_1,\dots,C_{n-2}}$ is obtained from the (rescaled) canonical Nambu bracket on $\Omega\subseteq \R^n$ as follows:
$$
\{f,g\}_{\nu;C_1,\dots,C_{n-2}}=\nu \cdot \{C_1,\dots,C_{n-2},f,g\}=\nu \cdot \dfrac{\partial(C_1,\dots,C_{n-2},f,g)}{\partial(x_1,\dots,x_n)}.
$$

For similar Hamilton-Poisson and respectively Nambu-Poisson formulations of completely integrable systems see e.g. \cite{g1}, \cite{g2}, \cite{nambu1}, \cite{nambu2}.

Note that the functions $C_1,\dots,C_{n-2}$ form a complete set of Casimirs for the Poisson bracket $\{\cdot,\cdot\}_{\nu;C_1,\dots,C_{n-2}}$.

Recall that the Hamiltonian vector field $X=X_H\in\mathfrak{X}(\Omega)$ is acting on an arbitrary real function $f\in \mathcal{C}^{k}(\Omega,\R)$, $k\geq 2$ as: 
$$X_{H}(f)=\{f,H\}_{\nu;C_1,\dots,C_{n-2}}\in \mathcal{C}^{1}(\Omega,\R).$$

Consequently, the vector field $X=X_H$ may also be expressed as a Nambu-Hamiltonian vector field on $\Omega$ as follows:
$$
X(f)=\nu\cdot\{C_1,\dots,C_{n-2},f,H\}.
$$

Hence, the differential system \eqref{sys} can be locally written in $\Omega$ as a Hamilton-Poisson dynamical system of the type:
$$\left\{\begin{array}{l}
\dot x_{1}=\{x_1,H\}_{\nu;C_1,\dots,C_{n-2}}\\
\dot x_{2}=\{x_2,H\}_{\nu;C_1,\dots,C_{n-2}}\\
\cdots \\
\dot x_{n}=\{x_n,H\}_{\nu;C_1,\dots,C_{n-2}},\\
\end{array}\right.
$$
or equivalently as a Nambu-Hamiltonian dynamical system:
\begin{equation}\label{sysham}
\left\{\begin{array}{l}
\dot x_{1}=\nu \cdot \dfrac{\partial(C_1,\dots,C_{n-2},x_1,H)}{\partial(x_1,\dots,x_n)}\\
\dot x_{2}=\nu \cdot \dfrac{\partial(C_1,\dots,C_{n-2},x_2,H)}{\partial(x_1,\dots,x_n)}\\
\cdots \\
\dot x_{n}=\nu \cdot \dfrac{\partial(C_1,\dots,C_{n-2},x_n,H)}{\partial(x_1,\dots,x_n)}.\\
\end{array}\right.
\end{equation}

Next result from \cite{tudoran} provide a local normal form of completely integrable systems. This theorem will be the key point for proving that complete integrability implies the existence of symmetry.
\begin{theorem}\label{mainresult}
Let \eqref{sys} be a $\mathcal{C}^{1}$ differential system having a set of $n-1$ independent $\mathcal{C}^2$ conservation laws defined on an open subset $\Omega\subseteq \R^n$. Assume that there exists a $\mathcal{C}^1$ rescaling function $\nu$, nonzero on an open and dense subset $\Omega_0$ of $\Omega$, such that the system \eqref{sys} admits a Hamilton-Poisson realization of the type \eqref{sysham} and the Lebesgue measure of the set $$\mathcal{O}:=\left\{ x=(x_1,\dots,x_n)\in \Omega_{0} \mid \operatorname{div}(X)(x)\cdot\dfrac{\partial(1/\nu,C_{1},\dots,C_{n-2},H)}{\partial(x_1,\dots,x_n)}(x)=0\right\}$$ in $\Omega_{0}$ is zero. 

Then, the change of variables $(x_1,\dots,x_n)\mapsto \Phi(x_1,\dots,x_n)=(u_1,\dots,u_n)$ given by
\begin{equation*}
\left\{\begin{array}{l}
u_1 =1/\nu(x_1,\dots ,x_n)\\
u_2 =C_1(x_1,\dots ,x_n)/\nu(x_1,\dots ,x_n)\\
\cdots \\
u_{n-1} =C_{n-2}(x_1,\dots ,x_n)/\nu(x_1,\dots ,x_n)\\
u_n =H(x_1,\dots ,x_n)/\nu(x_1,\dots ,x_n)\\
ds=-\operatorname{div}(X)dt,
\end{array}\right.
\end{equation*}
in the open and dense subset $\Omega_{00}:=\Omega_{0}\setminus \mathcal{O}$ of $\Omega_0$, transforms the system \eqref{sysham} restricted to $\Omega_{00}$ into the linear differential system $u^\prime_1=u_1, \ u^\prime_2=u_2, \dots, \ u^\prime_n=u_n$, where "prime" stand for the derivative with respect to the new time "$s$". 
\end{theorem}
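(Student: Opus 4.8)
The plan is to verify directly that the stated substitution turns \eqref{sysham} into $u_i'=u_i$, the only substantial ingredient being a single divergence identity. First I would unify the coordinates by setting $F_1:=1$, $F_2:=C_1,\dots,F_{n-1}:=C_{n-2}$, $F_n:=H$, so that all new variables take the uniform shape $u_i=F_i/\nu=F_i\cdot(1/\nu)$. Every $F_i$ is a conservation law of \eqref{sys}: for $i\ge 2$ one has $X(F_i)=\nu\{C_1,\dots,C_{n-2},F_i,H\}=0$ because the Nambu bracket is alternating and $F_i$ already occupies one of its slots, while $F_1=1$ is constant. In particular $X(H)=0$ as well.

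Next I would compute the derivative of each coordinate along $X$ in the original time $t$. Using the Leibniz rule together with $X(F_i)=0$,
\begin{equation*}
\dot u_i = X\left(F_i\cdot\frac{1}{\nu}\right)=X(F_i)\cdot\frac{1}{\nu}+F_i\cdot X\left(\frac{1}{\nu}\right)=F_i\,X\left(\frac{1}{\nu}\right)=\nu\,X\left(\frac{1}{\nu}\right)u_i,
\end{equation*}
so that, strikingly, all $n$ coordinates obey the same scalar equation with common factor $\nu\,X(1/\nu)$, with no division issues even where some $F_i$ vanishes.

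The heart of the matter is the identity $\nu\,X(1/\nu)=-\operatorname{div}(X)$. To prove it I would write $X=\nu X_0$, where $X_0$ is the unrescaled Nambu--Hamiltonian field $X_0(f)=\{C_1,\dots,C_{n-2},f,H\}$, and use the standard fact that Nambu flows preserve the Lebesgue volume, i.e. $\operatorname{div}(X_0)=0$. In form language this is immediate: $\iota_{X_0}(dx_1\wedge\cdots\wedge dx_n)$ equals, up to sign, the closed $(n-1)$-form $dC_1\wedge\cdots\wedge dC_{n-2}\wedge dH$, whence $\operatorname{div}(X_0)\,dx_1\wedge\cdots\wedge dx_n=\mathcal{L}_{X_0}(dx_1\wedge\cdots\wedge dx_n)=d\,\iota_{X_0}(dx_1\wedge\cdots\wedge dx_n)=0$. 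Therefore $\operatorname{div}(X)=\operatorname{div}(\nu X_0)=X_0(\nu)=\tfrac{1}{\nu}X(\nu)$, and combining this with the chain rule $X(1/\nu)=-\nu^{-2}X(\nu)$ gives $\nu\,X(1/\nu)=-\tfrac{1}{\nu}X(\nu)=-\operatorname{div}(X)$. Substituting back yields $\dot u_i=-\operatorname{div}(X)\,u_i$, and since the reparametrised time satisfies $ds=-\operatorname{div}(X)\,dt$, I obtain $u_i'=\dot u_i/(ds/dt)=u_i$ for every $i$.

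Finally I would confirm that $\Phi$ is genuinely a change of variables on $\Omega_{00}$. Row-reducing its Jacobian (subtracting $C_k$ times the $\nabla(1/\nu)$-row from the $\nabla(C_k/\nu)$-row, and $H$ times that row from the $\nabla(H/\nu)$-row) and factoring $1/\nu$ out of the resulting $n-1$ rows gives
\begin{equation*}
\det D\Phi=\nu^{-(n-1)}\,\frac{\partial(1/\nu,C_1,\dots,C_{n-2},H)}{\partial(x_1,\dots,x_n)},
\end{equation*}
which is nonzero exactly where the second factor is; together with the requirement $\operatorname{div}(X)\neq 0$ for the time reparametrisation, this is precisely the condition $x\notin\mathcal{O}$. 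Since $\mathcal{O}$ is the zero set of a continuous function and has Lebesgue measure zero, its complement $\Omega_{00}=\Omega_0\setminus\mathcal{O}$ is open and dense, so $\Phi$ is a local diffeomorphism there. The main obstacle is the divergence identity $\nu\,X(1/\nu)=-\operatorname{div}(X)$: everything hinges on recognising that rescaling the divergence-free field $X_0$ by $\nu$ produces exactly the factor $-\operatorname{div}(X)$ in $\dot u_i=\nu X(1/\nu)u_i$, which is what makes the simultaneous linearisation of all coordinates possible.
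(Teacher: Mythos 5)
The paper does not actually prove this theorem: it is quoted verbatim from the reference [tudoran] ("Next result from \cite{tudoran}\dots"), so there is no in-document proof to compare against. Your argument is a correct, self-contained verification: the reduction to the single identity $\nu\,X(1/\nu)=-\operatorname{div}(X)$ via the volume-preservation of the unrescaled Nambu field $X_0$ (i.e. $\iota_{X_0}(dx_1\wedge\cdots\wedge dx_n)=\pm\,dC_1\wedge\cdots\wedge dC_{n-2}\wedge dH$ closed, hence $\operatorname{div}(X_0)=0$ and $\operatorname{div}(\nu X_0)=X_0(\nu)$) is exactly the right mechanism, the computation $\dot u_i=\nu X(1/\nu)\,u_i$ is clean because every $F_i$ is a first integral, and your row-reduction of $\det D\Phi$ correctly identifies the set $\mathcal{O}$ as the locus where either the coordinate change or the time reparametrisation degenerates. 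The regularity bookkeeping also checks out: the $\mathcal{C}^2$ hypothesis on the conservation laws is precisely what makes $dC_1\wedge\cdots\wedge dC_{n-2}\wedge dH$ closed, and $\mathcal{O}$ is relatively closed in $\Omega_0$ with measure zero, so $\Omega_{00}$ is indeed open and dense.
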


\begin{remark}
The divergence operator used in the theorem above, is the divergence associated with the standard Lebesgue measure on $\R^n$, namely $$\mathcal{L}_{X}dx_1\wedge\dots\wedge dx_n=\operatorname{div}{X}dx_1\wedge\dots\wedge dx_n,$$ where $\mathcal{L}_{X}$ stand for the Lie derivative along the vector field $X$.
\end{remark}

\begin{remark}\label{okrem}
Recall form \cite{tudoran} that if the rescaling function $\nu=:\nu_{cst.}$ is a constant function, then the Lebesgue measure of the set $\mathcal{O}=\Omega_0=\Omega$ is nonzero in $\Omega_0$, and hence the assumptions of the Theorem \eqref{mainresult} do not hold. In this case, one search for a new $\mathcal{C}^1$ rescaling function $\mu$ defined on an open and dense subset $\Omega_0$ of $\Omega$, such that the vector field $\mu \cdot X$ satisfies the assumptions of the Theorem \eqref{mainresult}. The differential system generated by the vector field $\mu \cdot X$ can also be realized as a Hamilton-Poisson dynamical system with respect to the Poisson bracket  $\{\cdot,\cdot\}_{\mu\cdot\nu_{cst.};C_1,\dots,C_{n-2}}$ and respectively the same Hamiltonian $H$ as for the Hamilton-Poisson realization \eqref{sysham} of the vector field $X$.
\end{remark}

\section{Completely integrable systems admits Lie symmetries}

Let us start this section by recalling a classical result concerning Lie symmetries of dynamical systems. In order to do that, consider a $n$-dimensional dynamical system, $\dot{x}=X(x)$, generated by a $\mathcal{C}^1$ vector field $X=X_1(x_1,\dots,x_n)\partial_{x_1}+\dots+X_n(x_1,\dots,x_n)\partial_{x_n}$, and a one-parameter Lie group of transformations $G$ given by
\begin{equation*}
\left\{\begin{array}{l}
x^{\star}_1 (x_1,\dots,x_n;\varepsilon) =x_1 +\varepsilon \xi_1 (x_1,\dots, x_n)+O({\varepsilon}^2)\\
\cdots \\
x^{\star}_n (x_1,\dots,x_n;\varepsilon) =x_n +\varepsilon \xi_n (x_1,\dots, x_n)+O({\varepsilon}^2),
\end{array}\right.
\end{equation*}
acting on an open subset $U\subseteq \R^n$ with associated infinitesimal generator of class $\mathcal{C}^1$,  $Y=\xi_1(x_1,\dots,x_n)\partial_{x_1}+\dots+\xi_n(x_1,\dots,x_n)\partial_{x_n}\in\mathfrak{X}(U)$.

In the above hypothesis, a classical result (see e.g. \cite{stephani}) states that $G$ is a symmetry group for the dynamical system $\dot{x}=X(x)$ (i.e., his action maps trajectory to trajectory), if and only if $Y$ is a Lie symmetry of $X$, in the sense that $[X,Y]=\mu\cdot X$, for some scalar function $\mu=\mu(x_1,\dots,x_n)$.    

We can state now the main result of this article. This result is a generalization of a similar one for planar vector fields \cite{llibre}.
\begin{theorem}\label{main}
In the hypothesis of Theorem \eqref{mainresult}, to each vector field $\overline{Y}\in\ker\left( \operatorname{ad}_{\widetilde{X}}\right)$, where $\widetilde{X}=u_1 \partial_{u_1}+\dots+u_n \partial_{u_n}$, the vector field $Y={\Phi}^{\star}\overline{Y}$ is a Lie symmetry of the vector field $X$ that generates the dynamical system \eqref{sys}, where ${\Phi}^{\star}\overline{Y}$ is the pull-back of the vector field $Y$ through the diffeomorphism $\Phi$.
\end{theorem}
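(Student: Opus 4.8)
The plan is to read Theorem \ref{mainresult} not merely as a statement about solution curves but as an assertion that the diffeomorphism $\Phi$ conjugates $X$ — after absorbing the time reparametrization into a scalar rescaling — to the Euler (radial) field $\widetilde{X}=u_1\partial_{u_1}+\dots+u_n\partial_{u_n}$, and then to transport the commutation relation $[\widetilde{X},\overline{Y}]=0$ back to the $x$-coordinates using the naturality of the Lie bracket. The point is that the kernel condition $[\widetilde{X},\overline{Y}]=0$ downstairs is exactly what produces the Lie symmetry condition $[X,Y]=\mu\cdot X$ upstairs, with the scalar $\mu$ generated by the rescaling.

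First I would encode the time change as a scalar multiple of $X$. In the time $t$ the system \eqref{sysham} is $\dot{x}=X(x)$; substituting $ds=-\operatorname{div}(X)\,dt$ turns it into $dx/ds=-X/\operatorname{div}(X)$. Setting $\rho:=-1/\operatorname{div}(X)$ and $\widehat{X}:=\rho\,X$, Theorem \ref{mainresult} says precisely that $\Phi_{\star}\widehat{X}=\widetilde{X}$ on the open dense set $\Omega_{00}$, equivalently $\Phi^{\star}\widetilde{X}=\widehat{X}$. Here one uses that on $\Omega_{00}=\Omega_0\setminus\mathcal{O}$ both factors defining $\mathcal{O}$ are nonzero: $\operatorname{div}(X)\neq 0$, so $\rho$ is well defined and nowhere vanishing, and the Jacobian $\partial(1/\nu,C_1,\dots,C_{n-2},H)/\partial(x_1,\dots,x_n)\neq 0$, so $\Phi$ is a local diffeomorphism there.

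Next I would transport the kernel condition. Given $\overline{Y}\in\ker\left(\operatorname{ad}_{\widetilde{X}}\right)$, that is $[\widetilde{X},\overline{Y}]=0$, and putting $Y=\Phi^{\star}\overline{Y}$, the naturality of the Lie bracket under the diffeomorphism $\Phi$ gives
$$
[\widehat{X},Y]=[\Phi^{\star}\widetilde{X},\Phi^{\star}\overline{Y}]=\Phi^{\star}[\widetilde{X},\overline{Y}]=0 .
$$
Finally I would unwind the scalar factor using the Leibniz rule $[\rho X,Y]=\rho[X,Y]-(Y\rho)X$. Since $[\widehat{X},Y]=[\rho X,Y]=0$ and $\rho\neq0$ on $\Omega_{00}$, this rearranges to
$$
[X,Y]=\frac{Y\rho}{\rho}\,X=\mu\cdot X,\qquad \mu:=\frac{Y\rho}{\rho},
$$
which is exactly the Lie symmetry condition; by the classical result recalled above, $Y$ then generates a one-parameter group permuting the trajectories of \eqref{sys}.

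The main obstacle is the first step: one must be careful to interpret Theorem \ref{mainresult} as a conjugacy of vector fields and to check that the time change $ds=-\operatorname{div}(X)\,dt$ corresponds exactly to replacing $X$ by the scalar multiple $\rho X=-X/\operatorname{div}(X)$ \emph{before} applying $\Phi$, so that the conjugacy takes the clean form $\Phi_{\star}(\rho X)=\widetilde{X}$. Once this is in place, the remaining steps are formal consequences of the naturality of the Lie bracket and the Leibniz rule. It is worth emphasizing that the rescaling $\rho$ is precisely the mechanism that upgrades the naive commutation $[\widehat{X},Y]=0$ into the genuine symmetry relation $[X,Y]=\mu\cdot X$ with a generally nonzero $\mu$; in particular $\mu$ may be rewritten as $\mu=-Y\!\left(\ln\lvert\operatorname{div}(X)\rvert\right)$.
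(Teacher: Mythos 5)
Your proposal is correct and is essentially the paper's own argument: both rest on the fact that $\Phi$ carries $X$ to a nonvanishing scalar multiple of $\widetilde{X}$ (the scalar being $-\operatorname{div}(X)$, read in the appropriate chart), and then combine naturality of the Lie bracket with the Leibniz rule $[fZ,W]=f[Z,W]-W(f)Z$. The only difference is bookkeeping — the paper performs the Leibniz computation downstairs in the $u$-coordinates and pulls back the resulting relation $[\overline{X},\overline{Y}]=\overline{\mu}\,\overline{X}$, while you pull back the conjugacy first and compute upstairs — and your $\mu=Y(\rho)/\rho=-Y\left(\ln\lvert\operatorname{div}(X)\rvert\right)$ coincides with the paper's $\Phi^{\star}\overline{\mu}$.
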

\begin{proof}
Let us denote by $\overline{X}=\Phi_{\star}X$ the push-forward of the vector field $X$ by the local diffeomorphism $\Phi$ defined be the change of variables $(x_1,\dots,x_n)\mapsto (u_1,\dots,u_n)=\Phi(x_1,\dots,x_n)$. Using the expression of the local diffeomorphism $\Phi$, one obtains that $\overline{X}=h\widetilde{X}$, where the scalar function $h$ is given by $h=-\operatorname{div}(X)\circ\Phi^{-1}$. Recall that in $\Omega_{00}$, $\operatorname{div}(X)\neq 0$, and hence $h\neq 0$.

Let $\overline{Y}$ be an arbitrary vector field such that $\overline{Y}\in\ker\left( \operatorname{ad}_{\widetilde{X}}\right)$. Then we have successively 
\begin{align*}\label{imp}
[\overline{X},\overline{Y}]&=[h\widetilde{X},\overline{Y}]=h[\widetilde{X},\overline{Y}]-\overline{Y}(h)\widetilde{X}=h\operatorname{ad}_{\widetilde{X}}(\overline{Y})-\overline{Y}(h)\widetilde{X}=-\overline{Y}(h)\widetilde{X}\\
&=-\overline{Y}(h)\dfrac{1}{h}\overline{X}=-\dfrac{\overline{Y}(h)}{h}\overline{X}.
\end{align*}

Consequently, by denoting $-\dfrac{\overline{Y}(h)}{h}=\overline{\mu}$, one obtains that 
\begin{equation}\label{ecc}
[\overline{X},\overline{Y}]=\overline{\mu}\overline{X}.
\end{equation}

If one denotes $Y=\Phi^{\star}\overline{Y}$, then using \eqref{ecc}, the following relation hold true
\begin{equation*}\label{ec}
[X,Y]=[\Phi^{\star}\overline{X},\Phi^{\star}\overline{Y}]=\Phi^{\star}([\overline{X},\overline{Y}])=\Phi^{\star}(\overline{\mu}\overline{X})=\Phi^{\star}\overline{\mu}\cdot\Phi^{\star}\overline{X}=\Phi^{\star}\overline{\mu}\cdot X,
\end{equation*}
and hence we get the existence of a scalar function $\mu=\Phi^{\star}\overline{\mu}$ such that $[X,Y]=\mu X$.

\end{proof}

\subsection*{Acknowledgment}
This work was supported by a grant of the Romanian National Authority for Scientific Research, CNCS-UEFISCDI, project number PN-II-RU-TE-2011-3-0103.

\bigskip
\bigskip


\end{document}